\newtheorem{lem}{Lemma}
\newtheorem{dfn}{Definition}
\newtheorem{thm}{Theorem}
\newtheorem{ex}{Example}
\newtheorem{cor}{Corollary}
\newtheorem{prop}{Proposition}
\def\Nat{\mathbb{N}}
\def\Xlength{{\textsf{cyc-EvenLen}}}
\def\Xparity{{\textsf{cyc-Parity}}}
\def\Xlarge{{\textsf{cyc-MaxFirst}}}
\def\Xmean{{\textsf{cyc-MeanPayoff}}}
\def\Xparityenergy{{\textsf{cyc-GoodForEnergy}}}
\def\Xenergy{{\textsf{cyc-Energy}}}
\def\Xlast{{\textsf{cyc-EndsZero}}}
\def\parity{\textsc{Parity}}
\def\energy{\textsc{Energy}}
\def\energyparity{\textsc{Energy-Parity}}
\def\universe{\mathbb{U}}
\newcommand{\A}{{\mathcal A}}
\newcommand{\FC}[1]{\textsc{FCG}({#1})}
\newcommand{\AC}[1]{\textsc{ACG}({#1})}
\newcommand{\EAC}[1]{\textsc{SCG}({#1})}
\title{First Cycle Games\thanks{This work is supported by the Austrian Science Fund through grant P23499-N23, and through the RiSE network (S11403-N23, S11407-N23); ERC Start grant (279307: Graph Games); and Vienna Science and Technology Fund (WWTF) grant PROSEED Nr. ICT 10-050.}
}
\author{Benjamin Aminof
\institute{IST Austria\\Vienna, Austria}
\email{benj@ist.ac.at}
\and
Sasha Rubin
\institute{IST Austria and TU Wien\\
Vienna, Austria}
\email{\quad srubin@ist.ac.at}
}
\def\separable{unambiguous}
\def\separability{unambiguity}
\begin{document}

\maketitle

\begin{abstract}
First cycle games (FCG) are played on a finite graph by two players who push a token
along the edges until a vertex is repeated, and a simple cycle is formed. The winner is
determined by some fixed property $Y$ of the sequence of labels of the edges (or nodes)
forming this cycle. These games are traditionally of interest because of their
connection with infinite-duration games such as parity and mean-payoff games.

We study the memory requirements for winning strategies of FCGs and certain associated
infinite duration games. We exhibit a simple FCG that is not memoryless determined (this
corrects a mistake in {\it Memoryless determinacy of parity and mean payoff games: a
simple proof} by Bj\"orklund, Sandberg, Vorobyov (2004) that claims that FCGs for which
$Y$ is closed under cyclic permutations are memoryless determined). We show that
$\Theta(n)!$ memory (where $n$ is the number of nodes in the graph), which is always
sufficient, may be necessary to win some FCGs. On the other hand, we identify easy to
check conditions on $Y$ (i.e., $Y$ is closed under cyclic permutations, and both $Y$ and
its complement are closed under concatenation) that are sufficient to ensure that the
corresponding FCGs and their associated infinite duration games are memoryless
determined. We demonstrate that many games considered in the literature, such as
mean-payoff, parity, energy, etc., satisfy these conditions. On the complexity side, we
show  (for efficiently computable $Y$) that while solving FCGs is in PSPACE, solving
some families of FCGs is PSPACE-hard.
\end{abstract}

\section{Introduction}
First cycle games (FCGs) are played on a finite graph by two players who push a token
along the edges of the graph until a simple cycle is formed.
%The winner is determined by some fixed property $Y$ of the sequence of nodes forming this cycle.
%, and the corresponding game is written $\FC{Y}$.
Player 0 wins the play if the sequence of labels of the edges (or nodes) of the cycle
satisfies some fixed cycle property $Y$, and otherwise Player 1 wins. For instance, if
every vertex has an integer priority, the cycle property $Y = \Xparity$ states that the
largest priority occurring on the cycle should be even. For a fixed cycle property $Y$,
we write $\FC{Y}$ for the family of games over all possible arenas with this winning
condition.
% ,  yielding the game $\FC{\Xparity}$.
We are motivated by two questions: Under what conditions on $Y$ is every game in
$\FC{Y}$ memoryless determined? What is the connection between FCGs and
infinite-duration games?

%\begin{enumerate}
%\item Under what conditions on $Y$ is every game in $\FC{Y}$ memoryless determined?
%\item What is the connection between FCGs and infinite-duration games?
%\end{enumerate}

\noindent {\bf  First cycle games.} First, we give a simple example showing that first
cycle games (FCGs) are not necessarily memoryless determined, even if $Y$ is closed
under cyclic permutations (i.e., even if winning depends on the cycle but not on how it
was traversed), contrary to the claim in \cite{BSV}[Page $370$]. We then show that, for
a graph with $n$ nodes, whereas no winning strategy needs more than $(n-1)!$ memory
(since this is enough to remember the whole history of the game), some FCGs require at
least $\Omega(n!)$ memory. To complete the picture, we analyse the complexity of solving
FCGs and show that it is PSPACE-complete. More specifically, we show that if one can
decide in PSPACE whether a given cycle satisfies the property $Y$, then solving the
games in $\FC{Y}$ is in PSPACE; and that even for a trivially computable cycle property
$Y$ (namely, that the cycle ends with the label $0$), solving the games in $\FC{Y}$ is
PSPACE-hard.

\noindent {\bf  First Cycle Games and Infinite-Duration Games.} The main object used to
connect FCGs and infinite-duration games (such as parity games) is the
\emph{cycles-decomposition} of a path. Informally, a path is decomposed by pushing the
edges of the path onto a stack; as soon as a cycle is detected in the stack it is popped
and output, and the algorithm continues. We then say that a winning condition $W$ (such
as the parity or energy winning condition) is {\em $Y$-greedy on $\A$} if in the game on
arena $\A$ and winning condition $W$, Player $0$ is guaranteed to win if he ensures that
every cycle in the cycles-decomposition of the play satisfies $Y$, and Player $1$ is
guaranteed to win if she ensures that every cycle in the cycles-decomposition does not
satisfy $Y$. We prove a {\em Transfer Theorem:} if $W$ is $Y$-greedy on $\A$, then the
winning regions in the following two games on arena $\A$ coincide, and memoryless
winning strategies transfer between them: the infinite duration game with winning
condition $W$, and the FCG with winning condition $Y$.

To illustrate the usefulness of the concept of being $Y$-greedy, we instantiate the
definition to well-studied infinite-duration games: i) the parity winning condition (the
largest priority occurring infinitely often is even) is $Y$-greedy on every arena $\A$
where $Y = \Xparity$, ii) the mean-payoff condition (the mean payoff is at least $\nu$)
is $\Xmean_\nu$-greedy on every arena $\A$ (where $\Xmean_\nu = $ average payoff is at
least $\nu$), and iii) for every arena $\A$ with vertex set $V$, and largest weight $W$,
the energy condition stating that the energy level is always non-negative starting with
initial credit $W (|V|-1)$  is $\Xenergy$-greedy on $\A$ (where $\Xenergy = $ the energy
level is non-negative).

In order to prove memoryless determinacy of certain FCGs (and related infinite-duration
games) we generalise techniques used to prove that mean-payoff games are memoryless
determined (Ehrenfeucht and Mycielski \cite{EM79}). Given a cycle property $Y$, we first
consider the infinite duration games {\em $\AC{Y}$} (all cycles), and {\em  $\EAC{Y}$}
(suffix all-cycles). A game in the family $\AC{Y}$ requires Player $0$ to ensure that
every cycle in the cycles-decomposition of the play (starting from the beginning)
satisfies $Y$. A game in the family $\EAC{Y}$ requires Player $0$ to ensure that every
cycle in the cycles-decomposition of \emph{some suffix} of the play satisfies $Y$. As
was done in~\cite{EM79}, reasoning about infinite and finite duration games is
intertwined -- in our case, we simultaneously reason about games in $\FC{Y}$ and
$\EAC{Y}$. We define a property of arenas, which we call {\em $Y$-\separable}, and prove
a {\em Memoryless Determinacy Theorem}: a game from $\FC{Y}$ whose arena $\A$ is
$Y$-\separable\ is  memoryless determined. Combining this with the Transfer Theorem
above, we also get that if $\A$ is $Y$-\separable, then any game with a winning
condition $W$ that is $Y$-greedy on $\A$, is memoryless determined\footnote{Taking $Y$
to be $\Xparityenergy$ (defined to be that either the energy level is positive, or it is
zero and the largest priority occurring is even) and noting that for every arena $\A$ we
have: i) $\A$ is $Y$-\separable\ and, ii) the game in $\AC{Y}$ over $\A$ is $Y$-greedy
on $\A$; we obtain a proof of \cite{ChDoEnergyParity}[Lemma $4$] that no longer relies
on the incorrect result from \cite{BSV}.}.

Although checking if an arena is $Y$-\separable\ may not be hard, it has two
disadvantages: it involves reasoning about infinite paths and it involves reasoning
about the arena whereas, in many cases, memoryless determinacy is guaranteed by the
cycle property $Y$ regardless of the arena (this is the case for example with $Y =
\Xparity$). Therefore, we also provide easy to check `finitary' sufficient conditions on
$Y$ (namely that $Y$ is closed under cyclic permutations, and both $Y$ and its
complement are closed under concatenation) that ensure $Y$-\separability\ of every
arena, and thus memoryless determinacy for all games in $\FC{Y}$. We demonstrate the
usefulness of these conditions by observing that typical cycle properties are easily
seen to satisfy them, e.g., $\Xparity, \Xmean_\nu, \Xenergy$.

We conclude by noting that, in particular, if $Y$ is closed under cyclic permutations,
and both $Y$ and its complement are closed under concatenation, then games with winning
condition $W$ are memoryless determined on every arena $\A$ for which $W$ is $Y$-greedy
on $\A$. As noted above, for many
 winning conditions $W$ (such as mean-payoff, parity, and energy winning conditions) it is easy to
find a cycle property $Y$ satisfying the mentioned closure conditions, and for which $W$
is $Y$-greedy on the arena of interest. This provides an easy way to deduce memoryless
determinacy of these classic games.

\noindent {\bf  Related work.} As just discussed, this work extends~\cite{EM79}, finds a
counter-example to a claim in~\cite{BSV}, and supplies a proof of a lemma
in~\cite{ChDoEnergyParity}. Conditions that ensure (or characterise) which games have
memoryless strategies appear for example in \cite{BFFM11, GiZe05, Ko06}. However, all of
these deal with infinite duration games and do not exploit the connection to finite
duration games.

Due to space limitations, proofs appear in the full version of the article.

\section{Definitions}

In this paper all games are two-player turn-based games of perfect information played on
finite graphs. The players are called Player $0$ and Player $1$.

%A game is of {\em finite-duration} if every play is of finite length. A game is of {\em infinite-duration} if every play is of infinite length.
%
%\begin{fact}[Zermelo] \label{zermelo}
%Every finite-duration game is determined.
%\end{fact}

\noindent{\bf Arena} An {\em arena} is a labeled directed graph $\A =
(V_0,V_1,E,\universe, \lambda)$ where
\begin{enumerate}
\item $V_0$ and $V_1$ are disjoint sets of vertices of Player 0 and Player 1,
    respectively; the set of vertices of the arena $V := V_0 \cup V_1$ is non-empty.
\item $E \subseteq V \times V$ is a set of edges with no dead-ends (i.e., for every
    $v \in V$ there is some edge $(v,w) \in E$);
\item $\universe$ is a set of possible labels.
\item $\lambda:E \to \universe$ is a {\em labeling} function, used by the winning
    condition.
\end{enumerate}

%Vertices in the set $V_i$ are said to {\em belong to Player $i$}.

Typical choices for $\universe$ are $\mathbb{R}$ and $\mathbb{N}$. Games in which
vertices are labeled instead of edges can be modeled by ensuring $\lambda(v,w) =
\lambda(v,w')$ for all $v,w,w' \in V$. Similarly, games in which vertices are labeled by
elements of $\universe'$ and edges are labeled by elements of $\universe''$ can be
modeled by labeling edges by elements of $\universe' \times \universe''$. As usual, if
$u = e_1 e_2 \cdots$ is a (finite or infinite) sequence of edges in the arena, we write
$\lambda(u)$ for the string of labels $\lambda(e_1) \lambda(e_2) \cdots$.

%Typically an arena is amended with an edge-labelling {\em weight} function $w:E \to \universe$ or a vertex-labelling {\em priority} function $\rho:V \to \universe$, where $\universe$ is some fixed set (like $\mathbb{R}$ or $\mathbb{N}$).
% \sr{priorities can be modeled, as can priorites and weights (use pairs)}

\noindent{\bf Plays and strategies}
	A {\em play} $\pi = \pi_0, \pi_1, \ldots$ in an arena is an infinite\footnote{For
simplicity, we consider plays of both finite and infinite duration games to be infinite.
However, in a finite duration game (and thus in any FCG) the winner is determined by a
finite prefix of the play, and the moves after this prefix are immaterial.} sequence
over $V$ such that $(\pi_j,\pi_{j+1}) \in E$ for all $j \in \mathbb{N}$. The node
$\pi_0$ is called the \emph{starting} node of the play. We denote the set of all plays
in the arena $\A$ by $plays(\A)$.
	A {\em strategy} for Player $i$ is a function $S:V^*V_i \to V$ such that if $u \in
V^*$ and $v \in V_i$ then $(v,S(uv)) \in E$.
	A strategy $S$ for Player $i$ is {\em memoryless} if $S(uv) = S(u'v)$ for all $u,u'
\in V^*, v \in V_i$.
	A play $\pi$ is {\em consistent} with $S$, where $S$ is a strategy for Player $i$,
if for every $j \in \Nat$  such that $\pi_j \in V_i$, it is the case that $\pi_{j+1} =
S(\pi_0 \cdots \pi_j)$. 	
	A strategy $S$ for Player $i$ is {\em generated by a Moore machine} if there exists
a finite set $M$ of {\em memory states}, an {\em initial state} $m_I \in M$, a {\em
memory update} function $\delta:V \times M \to M$, and a {\em next-move function}
$\rho:V \times M \to V$ such that if $u = u_0 u_1 \cdots u_l$ is a prefix of a play with
$u_l \in V_i$ then $S(u) = \rho(u_l,m_l)$ where $m_l$ is defined inductively by $m_0 =
m_I$ and $m_{i+1} = \delta(u_i,m_i)$. A strategy $S$ is {\em finite-memory} if it is
generated by some Moore machine. A strategy $S$ {\em uses memory at most $k$} if it is
generated by some Moore machine with $|M| \leq k$. A strategy $S$ {\em uses memory at
least $k$} if every Moore machine generating $S$ has $|M| \geq k$.

\noindent{\bf Games, Winning Conditions, and Memoryless Determinacy} A {\em game} is a
pair $(\A,O)$ where $\A = (V_0,V_1,E,\universe,\lambda)$ is an arena and $O \subseteq
plays(\A)$ is an {\em objective} (usually induced by the labeling). If either $V_0$ or
$V_1$ is empty, then the game $(A,O)$ is called a {\em solitaire game}. A play $\pi$ in
a game $(\A,O)$ is {\em won by Player $0$} if $\pi \in O$, and {\em won by Player $1$}
otherwise. A strategy $S$ for Player $i$ is {\em winning starting from a node $v \in V$}
if every play $\pi$ that starts from $v$ and is consistent with $S$ is won by Player
$i$.

 A {\em winning condition} is a set $W \subseteq \universe^\omega$.
If $W$ is a winning condition and $\A$ is an arena, the objective $O_W(\A)$ induced by
$W$ is defined as follows: $O_W(\A) = \{ v_0 v_1 v_2 \cdots \in plays(\A) \mid
\lambda(v_0,v_1) \lambda(v_1,v_2) \cdots \in W \}$.
% for every play $\pi  = v_0 v_1 v_2 \cdots$ in $A$, we have that
%$\pi \in O_W(A)$ iff $\lambda(v_0,v_1) \lambda(v_1,v_2) \cdots \in W$.
Here are some standard winning conditions:

%  An {\em objective} for an arena with vertex set $V$ is a set $O \subseteq V^\omega$.
  %The complementary objective $V^\omega \setminus W$ is written $\neg W$.

  \begin{itemize}
  \item The {\em parity condition} $\parity$ consists of those infinite sequences
      $c_1 c_2 \cdots \in \mathbb{N}^\omega$ such that the largest label occurring
      infinitely often is even.
        \item  For $\nu \in \mathbb{R}$, the {\em $\nu$-mean-payoff condition}
            consists of those infinite sequences $c_1 c_2 \cdots \in  \mathbb{R}$
            such that $\lim\sup_{k \to \infty} \frac{1}{k} \sum_{i=1}^{k} c_i$ is at
            least $\nu$.
  \item The {\em energy condition} for a given {\em initial credit} $r \in \Nat$,
      written $\energy(r)$,  consists of those infinite sequences $c_1 c_2 \cdots
      \in \mathbb{Z}^\omega$ such that $r + c_1 + \cdots + c_k  \geq 0$ for all $k
      \geq 1$.
  \item The {\em energy-parity condition} $\energyparity(r)$ is defined as
      consisting of $(c_1,d_1) (c_2,d_2) \cdots \in  \mathbb{N} \times \mathbb{Z}$
      such that $c_1 c_2 \cdots$ is in $\parity$ and $d_1 d_2 \cdots$ is in
      $\energy(r)$.
  \end{itemize}

The {\em (memoryless) winning region} of Player $i$ is the set of vertices $v \in V$
such that Player $i$ has a (memoryless) winning strategy starting from $v$. A game is
{\em pointwise memoryless for Player $i$} if the memoryless winning region for Player
$i$ coincides with the winning region for Player $i$. A game is {\em uniform memoryless
for Player $i$} if there is a memoryless strategy for Player $i$ that is winning
starting from every vertex in that player's winning region.

 A game is {\em determined} if the winning regions partition $V$.
A game is {\em pointwise memoryless determined} if it is determined and it is pointwise
memoryless for both players. A game is {\em uniform memoryless determined} if it is
determined and uniform memoryless for both players.

\noindent{\bf Cycles-decomposition} A {\em cycle} in an arena $\A$ is a sequence of
edges $(v_1,v_2) (v_2,v_3) \cdots (v_{k-1},v_k)(v_k,v_1)$.

Define an algorithm that processes a play $\pi \in plays(\A)$ and outputs a sequence of
cycles: at step $0$ start with empty stack; at step $j$ push the edge
$(\pi_j,\pi_{j+1})$, and if for some $k$, the top $k$ edges on the stack form a cycle,
this cycle is popped and output, and the algorithm continues to step $j+1$. The sequence
of cycles output by this algorithm is called the {\em cycles-decomposition of $\pi$},
and is denoted by $cycles(\pi)$. The {\em first cycle of $\pi$} is the first cycle in
$cycles(\pi)$. For example, if $\pi = v w x w v s (x y z)^\omega$, then $cycles(\pi) =
(w,x)(x,w), (v,w)(w,v), (x,y)(y,z)(z,x), (x,y)(y,z)(z,x),
\ldots$, and the first cycle of $\pi$ is $(w,x)(x,w)$.% When decomposing only a suffix $\pi_i, \pi_{i+1}, \ldots$ or a portion $\pi_i,
%\pi_{i+1}, \ldots, \pi_{i+l}$ of a play, the algorithm is modified in the obvious way
%(i.e., at time $j$, if there are still edges to process then process the edge
%$(\pi_{i+j},\pi_{i+j+1})$).
%
Note that $cycles(\pi)$ is such that at most $|V|-1$ edges of $\pi$ do not appear in it
(i.e, they are pushed but never popped -- like the edge $(v,s)$ in the example above).
As we show in the full version, this allows one to reason, for instance, about the
initial credit problem for energy games (cf. \cite{ChDoEnergyParity}).

\noindent{\bf Cycle properties} A {\em cycle property} is a set $Y \subseteq
\universe^*$, used later on to define winning conditions for games. Here are some cycle
properties that we refer to in the rest of the article:
%The winner of a FCG is determined by a set $X$ of cycles.\sr{we must say, formally, if $X$ depends on the arena or not} We call such sets $X$  {\em cycle properties}.
%Here are some examples of cycle properties $X$.

\begin{enumerate}
\item Let $\Xlength$ be those sequences $c_1 c_2 \cdots c_k \in \universe^*$ such
    that $k$ is even.

\item Let $\Xparity$ be those sequences $c_1 \cdots c_k \in \mathbb{N}^*$ such that
    $\max_{1 \leq i \leq k} c_i$ is even.

\item Let $\Xenergy$ be those sequences $c_1 \cdots c_k \in \mathbb{Z}^*$ such that
    $\sum_{i=1}^k c_i \geq 0$.

\item Let $\Xparityenergy$ be those sequences $(c_1,d_1) \cdots (c_k,d_k) \in
    (\mathbb{N} \times \mathbb{Z})^*$ such that either $\sum_{i=1}^k d_i > 0$, or
    both $\sum_{i=1}^k d_i = 0$ and $c_1 \cdots c_k  \in \Xparity$.

\item Let $\Xmean_\nu$ be those sequences $c_1 \cdots c_k \in \mathbb{R}^*$ such
    that $\frac{1}{k} \sum_{i=1}^k c_i \leq \nu$,
    for some $\nu \in \mathbb{R}$. %Note that $\Xenergy$ is a special case of $\Xmean_0$.

\item Let $\Xlarge$ be those sequences $c_1 \cdots c_k \in \mathbb{N}^*$ such that
    $c_1 \geq c_i$ for all $1 \leq i \leq k$.

\item Let $\Xlast$ be those sequences $c_1 \cdots c_k \in \mathbb{N}^*$ such that
    $c_k = 0$.

%\item Let $\Xmatch$ be those sequences $c_1 \cdots c_k$ such that $c_1 = c_k$.
\end{enumerate}
If $Y \subseteq \universe^*$ is a cycle property, write $\neg Y$ for the cycle property
$\universe^* \setminus Y$.
We isolate two important classes of cycle properties (the first is inspired by \cite{BSV}):% [Page 370]):

\begin{enumerate}
\item Say that $Y$ is {\em closed under cyclic permutations} if $a b \in Y$ implies
    $ba \in Y$, for all $a \in \universe, b \in \universe^*$.
\item Say that $Y$ is {\em closed under concatenation} if $a \in Y$ and $b \in Y$
    imply that $ab \in Y$, for all $a,b \in \universe^*$.
\end{enumerate}

Note that the cycle properties 1-5 above are closed under cyclic permutations and
concatenation; and that $\neg \Xlength$ is closed under cyclic permutations but not
under concatenation.

\noindent{\bf First Cycle Games (FCGs)}
Given a cycle property $Y \subseteq \universe^*$, and an arena $\A =
(V_0,V_1,E,\universe, \lambda)$, let the objective $O_{\FC{Y}}(\A) \subseteq plays(\A)$
be such that $\pi \in O_{\FC{Y}}(\A)$ iff $\lambda(u) \in Y$ where $u$ is the {\em first
cycle} in the cycles-decomposition of $\pi$. The family $\FC{Y}$ of {\em first cycle
games of $Y$} consists of all games of the form $(A,O_{\FC{Y}}(\A))$ where $\A$ is an
arena with labels in $\universe$. For instance, $\FC{\Xparity}$ consists of those games
such that Player $0$ wins iff the largest label occurring on the first cycle is
even.\footnote{Formally, then, first cycle games are of infinite duration, although the
winner is determined after the first cycle appears on the play.}

\section{Finite Duration Cycle Games (on being first)}
%%%%%%%%%%%%%%%%%%%%%%%%%%%%%%%%%%%%%%%%%%%%%%%%%
%%%%%%%%%%%%%%%%%%%%%%%%%%%%%%%%%%%%%%%%%%%%%%%%%

In this section we analyse the memory required for winning strategies in first cycle
games, and the complexity of solving these games. We begin by correcting a mistake in
\cite{BSV}.

\begin{prop}
There exists a cycle property $Y$ closed under cyclic permutations and a game in
$\FC{Y}$ that is not
    pointwise memoryless determined.
\end{prop}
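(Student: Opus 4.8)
The plan is to construct an explicit small arena together with a cyclic-permutation-closed cycle property $Y$ for which neither player has a memoryless winning strategy from some vertex, yet the game is determined (so one player wins with memory). The key tension to exploit is that in an FCG the identity of the \emph{first} cycle depends on the order in which edges are pushed and popped, and this ordering is history-dependent; a memoryless strategy, by definition, cannot react to how the token arrived at a vertex, so it cannot control which of several potential cycles gets closed first. I would look for an arena in which, from a given Player~$i$ vertex, the ``correct'' successor (the one that closes a $Y$-cycle rather than a $\neg Y$-cycle) differs depending on the earlier part of the play.

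Concretely, I would take $Y$ to be something like $\Xlength$ (even-length cycles), which is manifestly closed under cyclic permutations, since length is invariant under rotation. Then I would design a graph with a small number of vertices in which there are two essentially different ways to reach a branching vertex $v$, one along a path of one parity and one along a path of the other parity, so that to make the eventual first cycle have even length Player~$0$ must send the token one way in the first case and the other way in the second case. Since a memoryless strategy fixes a single successor at $v$ regardless of history, it will be correct on one incoming history and incorrect on the other, and the opponent (controlling which history occurs, or controlling how the cycle is closed) can force the bad case. The same obstruction should apply symmetrically so that Player~$1$ also has no memoryless winning strategy from the relevant vertex, giving failure of \emph{pointwise} memoryless determinacy.

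The first concrete step is to fix the arena and the vertex partition between the two players, then to trace the cycles-decomposition of each relevant play by hand using the stack-based definition, verifying which edge sequence actually forms the first cycle in each case — this is the place where an off-by-one in cycle length or a miscounted stack can silently break the example. The second step is to exhibit, for each candidate memoryless strategy of the player who ought to win, an explicit consistent play whose first cycle lands in the wrong set, thereby refuting memorylessness; here one enumerates the finitely many memoryless choices at the branching vertices. The third step is to confirm determinacy and that the winner genuinely needs memory, i.e.\ present a history-dependent winning strategy for whichever player wins from the vertex in question.

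The main obstacle I anticipate is getting the \emph{first}-cycle semantics to bite in the intended way: because the decomposition pops a cycle the instant the top of the stack closes one, the first cycle is often forced very early and may be insensitive to the later choices I want to leverage. I would therefore need to engineer the arena so that the only cycles closable early are ``neutral'' or unavailable, and the genuinely decisive choice is the one that forms the first popped cycle — ensuring no shorter cycle preempts it. Carefully controlling this preemption, rather than the parity bookkeeping itself, is where the real design effort lies, and it is why a property like $\Xlength$ (where I can tune lengths freely) is more convenient than one like $\Xparity$.
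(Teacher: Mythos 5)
Your plan is exactly the paper's argument: the paper also takes $Y=\Xlength$ and encodes a ``Player~1 picks, Player~0 must match'' game, concretely on the four-vertex arena with $V_0=\{v_1\}$ and edges $(v_1,v_2),(v_2,v_1),(v_1,v_3),(v_3,v_2),(v_2,v_4),(v_4,v_1)$, where starting from $v_2$ the two Player-1 options reach $v_1$ along paths of different parities, so the correct choice at $v_1$ (close the length-$2$ cycle versus detour through $v_3$) depends on the history. All that remains in your proposal is to write down such an arena and trace the two plays, which is routine given the design principle you have already identified.
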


To see this, consider a game where Player 1 chooses from $\{a,b\}$ and Player 0 must
match the choice. This clearly requires Player 0 to have memory. The claim follows by
simply encoding this game as a FCG. For example, let the cycle-property $Y$ be
$\Xlength$, let the vertex set be $\{v_1,v_2,v_3,v_4\}$, let $V_0 = \{v_1\}$, and let
the edges be $\{(v_1,v_2), (v_2,v_1), (v_1,v_3), (v_3,v_2), (v_2,v_4),(v_4,v_1)\}$.

We now consider the difference between pointwise and uniform memoryless determinacy of
FCGs.

\begin{thm} \label{uniform}
\begin{enumerate}
\item Solitaire FCGs are pointwise memoryless determined.
\item There is a solitaire FCG that is not uniform memoryless determined.

\item \label{unif} If cycle property $Y$ is closed under cyclic permutations, and a
    game from $\FC{Y}$ is pointwise memoryless for Player $i$, then that game is
    uniform memoryless for Player $i$.
\end{enumerate}

\end{thm}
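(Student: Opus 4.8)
The plan is to prove each of the three parts of Theorem~\ref{uniform} in turn, since they address quite different phenomena.

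For part~(1), I would observe that in a solitaire game only one player has choices. Say $V_1=\emptyset$, so Player~$0$ controls every move (the other case is symmetric, with Player~$1$ winning via $\neg Y$). The key point is that the winner of an FCG is determined by the first cycle, and the first cycle is a \emph{simple} cycle: it is formed as soon as a vertex repeats. So any winning play from $v$ reaches a simple cycle all of whose vertices are distinct up to the repetition. Given a winning (possibly history-dependent) strategy from $v$, I would extract the finite initial segment up to the first cycle; because that segment visits each vertex at most once before closing the cycle, the moves Player~$0$ makes are indexed by distinct vertices, and hence can be replayed by a \emph{memoryless} strategy. More precisely, from the winning play $\pi = v\,\pi_1\pi_2\cdots$ with first cycle closing at step $k$, define $S(w)$ for each vertex $w$ on this prefix to be the successor that $\pi$ takes; this is well-defined because no vertex repeats before the cycle closes. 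This memoryless $S$ forces the same first cycle from $v$, so it is winning from $v$. Doing this for each $v$ in the winning region gives pointwise memorylessness.

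For part~(2), I would exhibit a small solitaire arena witnessing the gap between pointwise and uniform. The idea is that the memoryless strategy certifying a win from $v$ may be \emph{incompatible} with the one certifying a win from a different start node $v'$, because a single memoryless strategy fixes one outgoing edge per vertex globally. I would build an arena (all vertices belonging to Player~$0$) with a shared vertex $u$ from which two different continuations are needed depending on where the play started, together with a cycle property (e.g.\ $\Xlength$ or a parity-style $Y$) for which the two required first cycles have opposite $Y$-status. Concretely, two distinct start nodes each route through $u$; from $v$ the winning first cycle needs $u$ to go one way, from $v'$ it needs $u$ to go the other. Each start node is individually won by a memoryless strategy, but no single memoryless strategy wins from both, so the game is pointwise but not uniform memoryless determined.

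For part~(3) — which I expect to be \textbf{the main obstacle} — the task is to upgrade pointwise to uniform memorylessness when $Y$ is closed under cyclic permutations. Suppose the game is pointwise memoryless for Player~$i$: for each vertex $v$ in Player~$i$'s winning region $W_i$ there is a memoryless winning strategy $S_v$ from $v$. I want a \emph{single} memoryless strategy $S^*$ winning from all of $W_i$. The natural approach is a \emph{well-founded merging} argument: fix a linear order on $W_i$ and define $S^*$ at each Player-$i$ vertex $w\in W_i$ to agree with $S_{v}$ for the least $v$ (in the order) such that $w$ is reachable under $S_v$ within $v$'s winning region. The crux is to show $S^*$ is winning from every $v\in W_i$: a play consistent with $S^*$ might ``switch'' which $S_v$ it is tracking as it moves between vertices, and I must rule out the possibility that such switching produces a first cycle violating $Y$. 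Here is exactly where closure under cyclic permutations is used: the first cycle produced under $S^*$ is a simple cycle lying entirely within $W_i$, and I would argue that this simple cycle, viewed as a set of edges, coincides (up to a cyclic rotation of its starting point) with a simple cycle forced by a single $S_v$ along the play — because on a simple cycle every vertex is visited once, the ``active'' strategy at the moment the cycle closes dictated every edge of the cycle once we rotate to start at the appropriate vertex. Since that single $S_v$ is winning, the rotated cycle satisfies $Y$, and closure under cyclic permutations then gives that the cycle as actually traversed satisfies $Y$. Making the ``coincides up to rotation'' claim precise — i.e.\ that the closed simple cycle is generated by one consistent memoryless strategy rather than a genuine hybrid — is the delicate step, and the linear-order/least-index discipline on how $S^*$ is defined is what guarantees it.
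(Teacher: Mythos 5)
Your part (1) matches the paper's argument: in a solitaire game the play up to the first vertex repetition is a simple path, so the choices of the unique consistent play from $v$ can be read off as a memoryless strategy, and no vertex is visited twice before the first cycle closes. Your plan for part (3) is also essentially the paper's: stitch the pointwise memoryless strategies $S_v$ together, and use closure under cyclic permutations to argue that the first cycle produced by the merged strategy is, up to rotation, a first cycle forced by a single $S_v$. The step you flag as delicate is indeed the heart of the matter, and your least-index discipline does close it (the index of the strategy being tracked is non-increasing along any $S^*$-play, and since the first cycle is simple and returns to its starting vertex, the index is constant around the cycle, so the whole cycle is consistent with one $S_v$ inside that strategy's reachable set). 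What you still owe is the lemma hiding behind ``since that single $S_v$ is winning, the rotated cycle satisfies $Y$'': you must show that any simple cycle consistent with $S_v$ and lying in the set of nodes reachable under $S_v$ from $v$ occurs, after a suitable rotation, as the \emph{first} cycle of some play from $v$ consistent with $S_v$ --- take a shortest (hence simple) $S_v$-consistent path from $v$ to the first vertex of the cycle it meets, then traverse the cycle from there. This is precisely the paper's key lemma that $S_v$ is winning from every node reachable in a play consistent with $S_v$, with the first cycle appearing ``possibly shifted''.

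The genuine error is in part (2). You propose to realize the shared-vertex conflict with $Y=\Xlength$ ``or a parity-style $Y$''. Both of these are closed under cyclic permutations (the length and the maximal label of a cycle are invariant under rotation), so by parts (1) and (3) of the very theorem you are proving, every solitaire FCG with such a $Y$ \emph{is} uniform memoryless determined for both players; no counterexample can exist there. The conflict you want at the shared vertex can only arise when the $Y$-status of a cycle depends on where one starts reading it, i.e.\ when $Y$ is \emph{not} closed under cyclic permutations. The paper's example uses $\Xlarge$ (the first label must be the maximum) on the three-vertex solitaire arena with edges $(v_1,v_2), (v_1,v_3), (v_2,v_1), (v_2,v_3), (v_3,v_2)$ and $\lambda(v_i,v_j)=i$: each vertex is won by its own memoryless strategy, but the choices required at $v_1$ and at $v_2$ cannot be fixed simultaneously. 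Your structural idea for the example is right; the instantiation must be redone with a $Y$ that is not rotation-invariant.
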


\iffalse For the first item note that if $S$ is a strategy in a solitaire game then for
every node $v$, there is only one play consistent with this strategy starting from $v$.
Since no memory is needed to mimic this play up to the point a vertex repeats, and since
an FCG ends when a vertex repeats, the claim follows.

For the second item, consider the solitaire game with cycle-property $\Xlarge$ on the
arena with vertex set $\{v_1,v_2,v_3\}$, edges $\{(v_1,v_2), (v_1,v_3), (v_2,v_1),
(v_2,v_3), (v_3,v_2)\}$, and labeling $\lambda(v_i, v_j) = i$.

The proof of the third item will appear in the full version of the paper. Here we only
note that the key to ``stitching'' together memoryless strategies that are winning from
different nodes into one memoryless strategy for the whole winning region is that, if
$S_v$ is a memoryless winning strategy for Player $i$ starting from node $v$, then it is
also winning starting at any node $w$ which is reachable in some play consistent with
$S_v$. The reason for this hinges on the fact that one can show that every first cycle
that can appear in a play starting from $w$ can also appear (possibly shifted) in a play
starting from $v$. \fi

% (even + even = even, odd + odd = even).
\begin{prop} \label{memory_bounds}
\begin{enumerate}
\item For a FCG on an arena with $n$ vertices, if Player $i$ wins from $v$, then
    every winning strategy for Player $i$ starting from $v$ uses memory at most
    $(n-1)!$.
\item For every $n$ there exists a FCG on an arena with $3n+1$ vertices, and a
    vertex $v$, such that every winning strategy for Player $0$ starting from $v$
    uses memory at least $n!$.
\end{enumerate}
\end{prop}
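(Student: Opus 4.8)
For the first part (the upper bound) I would exploit the fact, noted in the paper's footnote, that an FCG is decided as soon as the first cycle appears, and that up to that moment the play is a \emph{simple path} in $\A$: a sequence $v = u_0, u_1, \dots, u_l$ of pairwise distinct vertices. Hence, given any winning strategy $S$ for Player $i$ from $v$, its moves after the first cycle are immaterial, and I would reproduce $S$ (up to the decisive prefix) by a Moore machine whose memory state after reading a prefix records exactly the simple path traversed so far, with a single absorbing sink state entered the moment a vertex repeats (i.e.\ the first cycle is completed). The next-move function then reads off $S$ on the recorded history. Since every such history is a simple path issuing from $v$, a routine count of these histories bounds the number of memory states by $(n-1)!$, which gives the claim; the only thing to check is that this machine indeed generates a winning strategy agreeing with $S$ on every prefix up to the first cycle, which is immediate from the construction.

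The second part is the substantive one, and I would prove it by encoding a ``copy a permutation'' task as an FCG. Fix the alphabet $\{1,\dots,n\}$ together with a separator symbol $\#$, and take $Y = \Xmatch$ to be the copy language consisting of all words $u\#v$ with $u=v$ together with every word that is \emph{not} of the form $u\#v$ with $|u|=|v|$ and $u\neq v$; equivalently, Player $1$ wins a cycle only when its label word is $u\#v$ with $u\neq v$. The arena has a hub $h\in V_1$ (the start vertex $v$), $n$ \emph{write} vertices controlled by Player $1$ and labelled $1,\dots,n$, $n$ \emph{read} vertices controlled by Player $0$ and labelled $1,\dots,n$, and $n$ auxiliary vertices used to carry edge labels and to realise the $\#$-separator between the two phases, for $3n+1$ vertices in all. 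From the hub and each write vertex, Player $1$ may advance to any not-yet-visited write vertex or cross into the read region; from each read vertex Player $0$ may advance to another read vertex or return to the hub, closing the play. Edge labels are arranged so that a play in which Player $1$ spells a word $\sigma$ over the write vertices and Player $0$ then spells a word $\tau$ over the read vertices produces, on returning to the hub, a first cycle with label word $\sigma\#\tau$; since the three regions are disjoint this path is simple, so this cycle is genuinely the first one.

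With this set-up I would verify two things. First, Player $0$ wins from $h$: whatever distinct sequence $\sigma$ Player $1$ spells, Player $0$ can remember it and echo it, forcing $\tau=\sigma$ and a word in $\Xmatch$; and any repetition or early return by Player $1$ produces a cycle whose word contains no $\#$, hence already lies in $Y$. Second, and this is the heart of the lower bound, any winning strategy needs at least $n!$ memory: fixing a Moore machine that generates a winning strategy $S$, let $m(\sigma)$ be its memory state at the moment the read phase begins after Player $1$ has written the permutation $\sigma$ of $\{1,\dots,n\}$. If $m(\sigma)=m(\sigma')$ for two distinct permutations, then, since the ensuing read phase consists solely of Player $0$'s own deterministic moves, the machine spells the same word $\tau$ in both plays; as $\tau$ can equal at most one of $\sigma,\sigma'$, against the other Player $1$ wins, contradicting that $S$ is winning. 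Hence $\sigma\mapsto m(\sigma)$ is injective and the machine has at least $n!$ states.

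I expect the main obstacle to be pinning down the arena and $Y$ so that there is genuinely \emph{no} low-memory way for Player $0$ to win --- in particular ruling out ``cheats'' in which Player $0$ exploits degenerate plays (early returns, repeated read vertices, or reads of the wrong length) to land on a word outside the losing form while using fewer than $n!$ states. This is handled by the choice of $Y$ above, which declares exactly the mismatched, equal-length, $\#$-separated words losing for Player $0$, so that on each full permutation the unique way to win is to reproduce it; but checking that every degenerate line of play falls on the correct side of $Y$, that the edge-labelling realises the word $\sigma\#\tau$ as intended, and that the vertex budget $3n+1$ is met, is the part that requires care.
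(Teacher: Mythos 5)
Your overall plan matches the paper's: item 1 is the same one-line ``remember the whole simple history'' argument the paper gives, and for item 2 the paper likewise uses an arena with $3n+1$ vertices in which Player 1 ``weaves'' a permutation of $n$ nodes and a matching-type cycle property ($\Xmatch$) forces Player 0 to reproduce it; your injectivity argument (two permutations reaching the same memory state yield the same echoed word $\tau$, which can match at most one of them) is exactly the right counting step. (Minor quibble on item 1: the number of simple histories from $v$ is $\sum_{k}(n-1)!/k!\approx e\,(n-1)!$, so getting the stated bound of exactly $(n-1)!$ needs a little more care than ``a routine count''; the paper is equally terse here.)

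However, your item 2 has a genuine gap: the cycle property you chose defeats the lower bound. You define $\neg Y$ to be \emph{only} the words $u\#v$ with $|u|=|v|$ and $u\neq v$, so every other word --- in particular any word with no $\#$, and any word $u\#v$ with $|u|\neq|v|$ --- is winning for Player 0. But then Player 0 wins from $h$ with \emph{zero} memory: after the separator he can return to the hub immediately (first cycle labelled $\sigma\#v$ with $|v|<|\sigma|$, hence in $Y$), or close a two-edge cycle between two read vertices (a word with no $\#$, hence in $Y$) --- and the read region must admit such cycles if Player 0 is to be able to spell an arbitrary permutation. So the map $\sigma\mapsto m(\sigma)$ need not be injective for \emph{every} winning strategy, which is what the proposition demands. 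You flag exactly this worry in your last paragraph but resolve it backwards: declaring ``exactly the mismatched equal-length words'' losing is what creates the escape routes. The fix is to put \emph{all} of Player 0's deviations into $\neg Y$ (every $u\#v$ with $v\neq u$, regardless of length, and every cycle formed inside the read region) while keeping Player 1's degenerate cycles in $Y$; since a cycle among write vertices and a cycle among read vertices would otherwise carry identical label words over your single alphabet $\{1,\dots,n\}$, this forces you either to label the two regions with disjoint alphabets or to structure the read region so that Player 0 cannot close a cycle before returning to the hub. Without one of these repairs the construction does not establish the $n!$ bound.
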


The first item is immediate since $(n-1)!$ is enough to remember the whole history of
the game up to the point a cycle is formed. The proof of the second item is by showing a
game where Player 1 can ``weave'' any possible permutation of $n$ nodes, whereas in
order to win Player 0 must remember this permutation. The construction is in the full
version of the paper.

Finally, we analyse the complexity of solving FCGs with efficiently computable cycle
properties.
\begin{thm} \label{thm:pspace}
\begin{enumerate}
\item If $Y$ is a cycle property for which solving membership is in $PSPACE$, then
    the problem of solving games in $\FC{Y}$ is in PSPACE.
\item The problem of solving games in $\FC{\Xlast}$ is PSPACE-complete.
\end{enumerate}
\end{thm}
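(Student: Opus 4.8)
The plan is to treat both parts through the observation that an FCG, although formally of infinite duration, is decided within a polynomial number of moves: on an arena with $n := |V|$ vertices, a play of length $n$ visits $n+1$ vertices and so, by the pigeonhole principle, must repeat a vertex; hence the first cycle of $\pi$ is completed within the first $n$ edges. Consequently the game is equivalent to a finite, alternating min--max tree of depth at most $n$ and out-degree at most $n$, and I will analyse both the upper and the lower bound through this tree.

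For the first item I would solve this tree by a standard backward-induction (depth-first) evaluation. The procedure maintains only the current play prefix $\pi_0 \pi_1 \cdots \pi_j$ on a stack; since the prefix has length at most $n$, this takes polynomial space. At an internal node it branches on the successors of $\pi_j$, returning the disjunction of the children's values if $\pi_j \in V_0$ and the conjunction if $\pi_j \in V_1$. A node is a leaf precisely when pushing the edge $(\pi_j,\pi_{j+1})$ first repeats a vertex, i.e.\ when $\pi_{j+1} = \pi_i$ for some $i \le j$; at such a leaf the first cycle is the simple cycle $\pi_i \pi_{i+1} \cdots \pi_j \pi_i$ read off the stack, and the leaf is winning for Player $0$ iff $\lambda$ of this cycle lies in $Y$. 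This last test is the membership query for $Y$, which is in PSPACE by hypothesis, and its space can be reused across leaves. Since the recursion depth is at most $n$ and each level keeps only $O(\log n)$ bookkeeping beyond the shared stack and the reused $Y$-membership space, the whole evaluation runs in polynomial space; as PSPACE is closed under this polynomial-depth composition, solving $\FC{Y}$ is in PSPACE.

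For the second item the key simplification is that, for the property $\Xlast$, the first cycle lies in $Y$ iff its \emph{last} label is $0$, i.e.\ iff the single back-edge $(\pi_j,\pi_i)$ that first closes a cycle carries label $0$. Thus solving $\FC{\Xlast}$ amounts to deciding whether Player $0$ can force the first repeating edge to be labelled $0$, and I would prove PSPACE-hardness by reducing from TQBF. Given $\Phi = Q_1 x_1 \cdots Q_m x_m\, \phi$ with $\phi$ in CNF, I build an arena in two phases on fresh (hence acyclic) vertices. In the assignment phase a line of choice vertices $u_1,\dots,u_m$, with $u_i$ owned by the player matching $Q_i$, each branches through a ``true'' vertex $T_i$ or a ``false'' vertex $F_i$ and on to $u_{i+1}$, recording a full assignment along a simple path ending in a sink $t$. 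In the verification phase Player $1$ moves from $t$ to a clause vertex $c_j$ of her choice, and from $c_j$ Player $0$ takes a label-$0$ edge into $T_i$ (if $x_i \in C_j$) or into $F_i$ (if $\neg x_i \in C_j$), while every forced edge $T_i \to u_{i+1}$, $F_i \to u_{i+1}$, and $T_m,F_m \to t$ is labelled $1$.

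The correctness hinges on tracing which edge first closes a cycle. If the chosen literal is \emph{true} under the assignment, its target vertex already lies on the path, so the label-$0$ literal edge itself closes the first cycle and Player $0$ wins; if the literal is \emph{false}, its target is fresh, the token is forced one step further along the label-$1$ edge into the already-visited $u_{i+1}$ (or $t$), which closes the first cycle with last label $1$, and Player $0$ loses. Hence, once an assignment has been fixed, Player $0$ can answer a clause $c_j$ by a winning move iff some literal of $C_j$ is satisfied, and, overall, Player $0$ wins from the start vertex iff $\Phi$ is true; combined with the first item this yields PSPACE-completeness. The main obstacle is precisely this gadget design: one must verify that no cycle forms prematurely (every assignment and verification vertex is visited at most once before a return edge), that the arena is dead-end free, and that the \emph{first} closing edge --- the only one $\Xlast$ inspects --- is exactly the edge whose label encodes the truth of the selected literal.
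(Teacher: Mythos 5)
Your proposal is correct, and your treatment of item 1 is essentially the paper's own argument: both evaluate the finite AND--OR tree obtained by truncating every play at the first repeated vertex (depth at most $n$), by a depth-first min--max computation that stores only the current prefix and reuses the space of the PSPACE membership test for $Y$ at the leaves. For item 2 you take a genuinely different route. The paper observes that Generalised Geography \emph{is already} a game in $\FC{\Xlast}$ once each edge is labeled by the owner of its source vertex --- the last label of the first cycle then records which player closed it --- so PSPACE-hardness is inherited in one line from the known hardness of Generalised Geography. You instead give a direct reduction from TQBF with an explicit assignment-then-verification gadget, built on the same structural observation you state at the outset, namely that $\Xlast$ inspects only the single edge that first closes a cycle, so that edge's label can be made to encode the truth of the literal Player $0$ selects. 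Your gadget checks out: no cycle can form during the assignment phase since every vertex there is fresh, the first cycle always closes within two steps of a clause vertex, and the closing edge carries label $0$ exactly when the chosen literal's target was already visited, i.e., when the literal is true under the constructed assignment. The paper's route is shorter and delegates the combinatorics to a known hard problem; yours is self-contained (it does not presuppose Generalised Geography, whose standard hardness proof is in any case essentially the same TQBF gadget) at the cost of having to verify the cycle-formation bookkeeping yourself, which you do.
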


\begin{proof}[Sketch]
%The first item follows by observing that solving the game amounts to evaluating a finite
%AND-OR tree of depth $n$ (obtained by unwinding the arena up to the point on each play
%where a cycle is formed), and since membership in $Y$ is in PSPACE implies that
%evaluating the leaves is in PSPACE.

For the first item,
observe that solving the game amounts to evaluating the finite AND-OR tree obtained by
unwinding the arena into all possible plays, up to the point on each play where a cycle
is formed; nodes belonging to Player $0$ are 'or' nodes, nodes belonging to Player $1$
are 'and' nodes, and a leaf is marked by 'true' iff the cycle formed on the way to it is
in $Y$. Since this tree has depth at most $n$ (the size of the arena), and since we
assumed membership in $Y$ is in PSPACE, marking the leaves can be done in PSPACE. So
evaluating the tree can be done in PSPACE.

%If $X$ is in $NP$ and $\FC{X}$ is pointwise memoryless determined then solving $\FC{X}$ is in $NP$.
For the second item, note that Generalised Geography can be thought of as a first cycle
game in which Player $i$ nodes are labeled by $i$, and $Y = \Xlast$. Note that computing
$Y$ is computationally trivial, but solving Generalised Geography is PSPACE-hard (see
for instance \cite{sipser}[Theorem $8.11$]).
\end{proof}

\section{Infinite Duration Cycle Games}

\subsection{On being greedy}

 We start by defining two types of infinite duration games
called the \emph{All-Cycles} and the \emph{Suffix All-Cycles} games, whose winning
condition is derived from $Y$. Informally, All-Cycles games are games in which Player
$0$ wins iff all cycles in the cycles-decomposition of the play are in $Y$, and Suffix
All-Cycles Games are games in which Player $0$ wins iff all cycles in the
cycles-decomposition of \emph{some suffix} of the play are in $Y$. Formally, for arena
$\A = (V_0,V_1,E,\universe,\lambda)$ and cycle property $Y \subseteq \universe^*$, we
define two objectives $O \subseteq plays(\A)$ and corresponding families of games as
follows:
\begin{enumerate}
\item $\pi \in O_{\AC{Y}}(\A)$ :if $\lambda(u) \in Y$ for {\em all cycles} $u$ in
    $cycles(\pi)$.
\item $\pi \in O_{\EAC{Y}}(\A)$ :if some {\em suffix} $\pi'$ of $\pi$ satisfies that
    $\lambda(u) \in Y$ for all cycles $u$ in $cycles(\pi')$.~\footnote{Note that
    this is \emph{not} the same as saying that $\lambda(u) \in Y$ for all but
    finitely many cycles $u$ in $cycles(\pi)$. For instance, let $Y$ be the property
    that the cycle has odd length, and take $ \pi := (v_1v_2v_1v_3v_2v_4)^\omega$.
    Note that i) decomposing the suffix $\pi'$ starting with the second vertex
    results in all cycles having odd length, and ii) it is not the case that almost
    all cycles in the cycles-decomposition of $\pi$ have odd length (in fact, they
    all have even length).}
\end{enumerate}

Define the corresponding families of games:
\begin{enumerate}
\item The family $\AC{Y}$ of {\em all-cycles games of $Y$} consists of all games of
    the form $(\A,O_{\AC{Y}}(\A))$.

\item The family $\EAC{Y}$ of {\em suffix all-cycles games of $Y$} consists of all
    games of the form $(\A,O_{\EAC{Y}}(\A))$.
\end{enumerate}

\begin{dfn}
Say that a {\em game $(\A,O)$ is $Y$-greedy} if $O_{\AC{Y}}(\A) \subseteq O$ and
$O_{\AC{\neg Y}}(\A) \subseteq V^\omega \setminus O$. Say that a {\em winning condition
$W$ is $Y$-greedy on arena $\A$} if the game $(\A,O_W)$ is $Y$-greedy.
%$\subseteq O_W(\A)$ and $O_{\AC{\neg Y}}(\A) \subseteq O_{\neg W}(\A)$.
\end{dfn}

Intuitively, $W$ being $Y$-greedy on $\A$ means that Player $0$ can win the game on
arena $\A$ with winning condition $W$ if he ensures that every cycle in the
cycles-decomposition of the play is in $Y$, and Player $1$ can win if she ensures that
every cycle in the cycles-decomposition of the play is not in $Y$.

For instance,  the winning condition $\parity$ (the largest priority occurring
infinitely often is even) is $\Xparity$-greedy on every arena $\A$, the
$\nu$-mean-payoff condition (the $\limsup$ average is at least $\nu$) is
$\Xmean_\nu$-greedy on every arena $\A$, and the energy condition (stating that the
energy level is always non-negative starting with initial credit $W (|V|-1)$, where $W$
is the largest weight and $V$ are the vertices of the arena $\A$) is $\Xenergy$-greedy
on $\A$.

\begin{thm}[Transfer] \label{greedy}
Let $(\A,O)$ be a $Y$-greedy game, and let $i \in \{0,1\}$.
%Let $O \subset V^\omega$ be $X$-greedy, let $(V_0,V_1,E,\lambda)$ be an arena, and let $i \in \{0,1\}$.
\begin{enumerate}
\item The winning regions for Player $i$ in the games $(\A,O)$ and
    $(\A,O_{\FC{Y}}(\A))$ coincide.
%, i.e., $\WR^i(O) = \WR^i(\FC{X})$.
\item For every memoryless strategy $S$ for Player $i$ starting from $v$ in arena
    $\A$: $S$ is winning in the game $(\A,O)$ if and only if $S$ is winning in the
    game $(\A,O_{\FC{Y}}(\A))$.
\end{enumerate}
%For every arena $(V_0,V_1,E)$ and every Player $i$,  the (memoryless) winning regions for Player $i$ coincide for objectives $\FC{X}$ and $\AC{X}$. Thus, if one of the games is memoryless determined, then so is the other.
\end{thm}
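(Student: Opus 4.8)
The plan is to prove both parts by isolating, for each player $i$, the single implication ``an FCG-winning strategy yields an $(\A,O)$-winning strategy'', and then to close the loop for part 1 using determinacy. The $Y$-greedy hypothesis is exactly what makes the conversion cheap: since $O_{\AC{Y}}(\A)\subseteq O$, it suffices for Player $0$ to turn an FCG-winning strategy into one forcing \emph{every} cycle of the cycles-decomposition into $Y$; dually, since $O_{\AC{\neg Y}}(\A)\subseteq V^\omega\setminus O$, for Player $1$ it suffices to force every cycle into $\neg Y$. I deliberately want to avoid routing through the ``stitching'' fact behind Theorem~\ref{uniform}(\ref{unif}), since that uses closure of $Y$ under cyclic permutations, which this theorem pointedly does not assume.

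The technical heart is a structural observation about $cycles(\pi)$ for a play $\pi$ starting at $v$: the stack always holds a \emph{simple} path whose bottom vertex is $v$. This follows by induction, as a pop occurs exactly when the pushed edge returns to a vertex currently on the stack, so total emptying can only happen upon returning to $v$, after which the bottom is reset to $v$. Hence whenever a cycle $C=s_j\cdots s_p s_j$ is output, the edges below it form a simple path $v=s_0\cdots s_j$ meeting $C$ only at $s_j$. I would then define, for any (possibly history-dependent) FCG strategy $S$, a companion strategy $S'$ in $(\A,O)$ that at every moment consults $S$ on the current \emph{stack content} rather than the true history. The key reconstruction lemma is that the path $s_0\cdots s_p s_j$ is itself a play-prefix consistent with $S$ from $v$ whose first cycle is exactly $C$: consistency holds because each move $S'$ ever made out of a Player-$i$ vertex $s_l$ equals $S(s_0\cdots s_l)$, and first-cycle-ness holds because $s_0,\dots,s_p$ are distinct, so $s_j$ is the first repeat. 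Thus an FCG-winning $S$ forces every output cycle into $Y$ (resp.\ $\neg Y$), placing the play in $O_{\AC{Y}}(\A)\subseteq O$ (resp.\ $O_{\AC{\neg Y}}(\A)\subseteq V^\omega\setminus O$), so $S'$ wins $(\A,O)$; the two players are symmetric.

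For part 2 I would note that this construction is memory-preserving: when $S$ is memoryless, $S(s_0\cdots s_l)=S(s_l)$, so $S'=S$, and the reconstruction lemma shows verbatim that a memoryless FCG-winning $S$ already wins $(\A,O)$. The converse direction of part 2 I would prove by contraposition: if some $S$-consistent play from $v$ had first cycle $C=c\cdots c$ on the wrong side of $Y$, then, using memorylessness, the opponent may loop $C$ forever after first reaching $c$ along a simple path. Because that prefix and $C$ meet only at $c$, the decomposition of the looping play outputs $C$ and nothing else (the prefix edges are never popped), placing it in the corresponding $\AC{\cdot}$ objective and contradicting that $S$ wins $(\A,O)$.

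Finally, part 1 follows by combining the two implications with two soft facts: FCGs are determined (the winner is fixed by the first cycle, which appears within $|V|$ steps, so the game is essentially finite), and no vertex is winning for both players in $(\A,O)$. These give that the FCG winning regions cover $V$, are contained in the respective $(\A,O)$ winning regions, and that the latter are disjoint, which together force the winning regions to coincide. I expect the main obstacle to be the reconstruction lemma, specifically checking that consulting $S$ on the stack content is a well-defined strategy and that each popped cycle coincides with the first cycle of a genuine $S$-play from $v$ \emph{with matching entry point}; it is precisely this entry-point match that lets the whole argument go through without assuming $Y$ is closed under cyclic permutations.
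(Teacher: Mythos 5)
Your proof is correct. The paper itself defers the proof of the Transfer Theorem to its full version, so there is no in-paper argument to compare against; but your route is the natural one and is sound: the invariant that the decomposition stack always holds a simple path from the starting vertex to the current vertex, the observation that the edges lying below a given stack edge when a cycle is popped are exactly those that lay below it when it was pushed (so the stack path is itself an $S$-consistent prefix whose first cycle is the popped cycle), the pumping of a bad first cycle into an $O_{\AC{\neg Y}}$- or $O_{\AC{Y}}$-play for the memoryless converse, and the use of determinacy of the (essentially finite) FCG together with disjointness of the $(\A,O)$ winning regions to close part~1 without ever converting an $(\A,O)$-strategy into an FCG-strategy. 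The one step that genuinely needs the care you flag is the stack-discipline argument for the reconstruction lemma, and your treatment of it is right.
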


\begin{cor}
Let $W$ be $Y$-greedy on arena $\A$. Then the game $(\A,O_W)$ is determined, and is
pointwise (uniform) memoryless determined if and only if the game $(\A,O_{\FC{Y}}(\A))$
is pointwise (uniform) memoryless determined.
\end{cor}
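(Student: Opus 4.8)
The plan is to derive everything from the Transfer Theorem (Theorem~\ref{greedy}), the only genuinely new ingredient being the determinacy of the first cycle game itself. Since $W$ is $Y$-greedy on $\A$, the game $(\A,O_W)$ is a $Y$-greedy game, so I may apply Theorem~\ref{greedy} to $(\A,O) = (\A,O_W)$ for both $i=0$ and $i=1$.

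First I would establish that the FCG $(\A,O_{\FC{Y}}(\A))$ is determined. This is immediate from its finite-duration nature: as noted in the sketch of Theorem~\ref{thm:pspace}, the first cycle appears within the first $|V|$ steps of any play, so solving the game is the evaluation of a finite AND-OR tree; equivalently, $O_{\FC{Y}}(\A)$ and its complement are both open (each is a union of cylinders determined by the finite prefix ending at the first cycle), making the objective clopen and hence the game Gale--Stewart determined. Consequently the winning regions of Player $0$ and Player $1$ in the FCG partition $V$. By Theorem~\ref{greedy}(1) the winning region for Player $i$ in $(\A,O_W)$ coincides with that in the FCG, for each $i$; since the latter partition $V$, so do the former, and $(\A,O_W)$ is determined.

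Next I would transport the memoryless notions. The key observation is that a memoryless strategy for Player $i$ is the same object (a function constant on the $V^*$-coordinate) in both games, so Theorem~\ref{greedy}(2) literally compares one fixed strategy across the two objectives. It follows that, for each vertex $v$, Player $i$ has a memoryless winning strategy from $v$ in $(\A,O_W)$ if and only if the same holds in the FCG; hence the memoryless winning regions of the two games coincide, for each $i$. Combining this with the coincidence of the full winning regions from the previous step, the equality ``memoryless winning region $=$ winning region'' holds for Player $i$ in $(\A,O_W)$ exactly when it holds in the FCG. Since both games are determined, this yields the pointwise half: $(\A,O_W)$ is pointwise memoryless determined iff the FCG is.

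For the uniform half I would argue directly with a single strategy. Suppose the FCG is uniform memoryless for Player $i$, witnessed by a memoryless $S$ winning from every vertex of Player $i$'s FCG winning region. By Theorem~\ref{greedy}(1) that winning region equals Player $i$'s winning region in $(\A,O_W)$, and by Theorem~\ref{greedy}(2) the same $S$ is winning from each of those vertices in $(\A,O_W)$; thus $(\A,O_W)$ is uniform memoryless for Player $i$, and the converse is symmetric. Taking both players and recalling determinacy gives the uniform equivalence. I expect the only real obstacle to be the first step---pinning down determinacy of the FCG---because Theorem~\ref{greedy} only asserts that the winning regions coincide and does not by itself guarantee that they partition $V$; once determinacy of the finite-duration FCG is in hand, the remainder is an unwinding of the definitions of (pointwise/uniform) memoryless determinacy through the two clauses of the Transfer Theorem.
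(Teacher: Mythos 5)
Your proof is correct and follows the route the paper intends: determinacy of the first cycle game comes for free from its finite-duration (clopen) objective, and both halves of the equivalence are then read off from the two clauses of Theorem~\ref{greedy}, applying clause (2) vertex-by-vertex to a single memoryless strategy for the uniform case. Nothing is missing.
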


\subsection{On being \separable}

\begin{dfn}
An arena $\A$ is  {\em $Y$-\separable} if $O_{\EAC{Y}}(\A) \cap O_{\EAC{\neg Y}}(\A) =
\emptyset$.
\end{dfn}

\begin{lem} \label{lem:EAgreedy}
If $\A$ is $Y$-\separable\ then the game $(\A,O_{\EAC{Y}}(\A))$ is $Y$-greedy.
\end{lem}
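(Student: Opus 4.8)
The plan is to reduce the lemma to a single monotonicity observation combined with the hypothesis. First I would establish, for an arbitrary cycle property $Z$, the containment $O_{\AC{Z}}(\A) \subseteq O_{\EAC{Z}}(\A)$. This holds because a play $\pi$ is itself a suffix of $\pi$ (the $0$-th suffix): if every cycle in $cycles(\pi)$ has label in $Z$, then taking the suffix $\pi' := \pi$ witnesses membership of $\pi$ in $O_{\EAC{Z}}(\A)$. No reasoning about the stack-based decomposition is needed here; one only observes that the defining `some suffix' quantifier in $O_{\EAC{Z}}$ is satisfied by the trivial suffix. I will record this as $O_{\AC{Z}}(\A) \subseteq O_{\EAC{Z}}(\A)$ for every $Z$, to be instantiated at $Z = Y$ and $Z = \neg Y$.

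To verify the first inclusion demanded by $Y$-greediness of the game $(\A,O_{\EAC{Y}}(\A))$, namely $O_{\AC{Y}}(\A) \subseteq O_{\EAC{Y}}(\A)$, I would simply instantiate the observation with $Z = Y$. Since the objective of the game under consideration is exactly $O := O_{\EAC{Y}}(\A)$, this inclusion is immediate.

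For the second inclusion, $O_{\AC{\neg Y}}(\A) \subseteq V^\omega \setminus O_{\EAC{Y}}(\A)$, I would instantiate the observation with $Z = \neg Y$ to obtain $O_{\AC{\neg Y}}(\A) \subseteq O_{\EAC{\neg Y}}(\A)$, and then invoke the hypothesis that $\A$ is $Y$-\separable, i.e.\ $O_{\EAC{Y}}(\A) \cap O_{\EAC{\neg Y}}(\A) = \emptyset$. This disjointness yields $O_{\EAC{\neg Y}}(\A) \subseteq V^\omega \setminus O_{\EAC{Y}}(\A)$, and chaining the two containments gives the desired inclusion. Together with the first inclusion, this establishes both conditions in the definition of $Y$-greedy, completing the argument.

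There is essentially no obstacle here: the mathematical content is carried entirely by the $Y$-\separability\ hypothesis, which supplies precisely the disjointness that the greediness condition requires on the $\neg Y$ side, while the $Y$ side is handled by the trivial-suffix containment. The only point meriting care is to avoid conflating $O_{\EAC{\neg Y}}$ with the superficially similar reading `almost all cycles fail $Y$' (cf.\ the footnote distinguishing these two notions); but the proof never uses that reading, relying solely on the trivial-suffix witness, so the distinction does not intervene.
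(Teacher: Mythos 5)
Your proposal is correct and matches the paper's own argument exactly: both rest on the trivial-suffix containment $O_{\AC{Z}}(\A) \subseteq O_{\EAC{Z}}(\A)$ applied to $Y$ and $\neg Y$, followed by the disjointness supplied by $Y$-\separability. Nothing further is needed.
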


\begin{thm}[Memoryless Determinacy]  \label{EM}
If arena $\A$ is $Y$-\separable, then the game $(\A,O_{\FC{Y}}(\A))$ is pointwise
memoryless determined. If $Y$ is also closed under cyclic permutations, then this game
is uniform memoryless determined.
\end{thm}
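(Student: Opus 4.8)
The plan is to reduce the statement to a memoryless-determinacy claim about the suffix all-cycles game $\EAC{Y}$ and then prove that claim by an Ehrenfeucht--Mycielski style induction on the number of edges. First, the \emph{uniform} conclusion follows from the \emph{pointwise} one for free: if $Y$ is closed under cyclic permutations and the game is pointwise memoryless for both players, then Theorem~\ref{uniform}(\ref{unif}), applied to $i=0$ and $i=1$, upgrades this to uniform memoryless determinacy. So it suffices to prove pointwise memoryless determinacy of $(\A,O_{\FC{Y}}(\A))$ for every $Y$-\separable\ $\A$, \emph{without} assuming cyclic closure. For this I would pass to the infinite-duration game: since $\A$ is $Y$-\separable, Lemma~\ref{lem:EAgreedy} gives that $(\A,O_{\EAC{Y}}(\A))$ is $Y$-greedy, and the Transfer Theorem~\ref{greedy} then says this game and the first cycle game have the same winning regions and that memoryless winning strategies pass back and forth between them. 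Hence pointwise memoryless determinacy of the first cycle game is \emph{equivalent} to that of $(\A,O_{\EAC{Y}}(\A))$, and I would prove the latter. The payoff of this move is that $O_{\EAC{Y}}(\A)$ is \emph{prefix-independent} (a play is won iff some, equivalently every, suffix is), which is the natural setting for the inductive gluing below.

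The induction is on the number of edges of $\A$, proving the statement simultaneously for $\FC{Y}$ and $\EAC{Y}$ on all $Y$-\separable\ arenas (the two being interchangeable by the reduction above). The base case is when every vertex has out-degree one: each vertex determines a unique play, so there are no genuine choices and both players are trivially memoryless; determinacy is clear because a first cycle game is a finite-duration game whose game tree is a finite AND--OR tree. For the inductive step, pick a vertex $v$ of out-degree at least two; by symmetry (interchanging $Y$ with $\neg Y$ and the two players) assume $v \in V_0$, and fix an out-edge $e=(v,w)$. Form $\A_e$, in which $v$'s only out-edge is $e$, and $\A_{\bar e}$, in which $e$ is deleted; both have strictly fewer edges, and, crucially, both are still $Y$-\separable, because $Y$-\separability\ is \emph{hereditary}: $O_{\EAC{Y}}(\A')\cap O_{\EAC{\neg Y}}(\A')\subseteq O_{\EAC{Y}}(\A)\cap O_{\EAC{\neg Y}}(\A)=\emptyset$ for any subarena $\A'$, since $plays(\A')\subseteq plays(\A)$. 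Thus the induction hypothesis applies to both. Since $v$ belongs to Player $0$, Player $1$'s moves are identical in $\A$, $\A_e$ and $\A_{\bar e}$, and any memoryless winning strategy for Player $0$ in $\A_e$ or in $\A_{\bar e}$ is immediately a memoryless winning strategy in $\A$ (consistent plays never leave the corresponding subarena). This gives, per starting vertex, the easy inclusion $W_0(\A)\supseteq W_0(\A_e)\cup W_0(\A_{\bar e})$ together with the required pointwise memoryless strategies for Player $0$.

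The heart of the proof, and the step I expect to be the main obstacle, is the opposite inclusion: showing that if Player $1$ wins from $v_0$ in both $\A_e$ and $\A_{\bar e}$, then she wins from $v_0$ in $\A$ \emph{with a single memoryless strategy}. The difficulty is that in $\A$ Player $0$ may use the edge $e$ or avoid it, and his choice can depend on the history, so neither inherited strategy is by itself obviously correct: playing the $\A_{\bar e}$-strategy fails as soon as Player $0$ escapes through $e$ to a vertex from which that strategy is not winning, and symmetrically for the $\A_e$-strategy. I would analyze a play in $\A$ in which Player $1$ follows a suitable choice among the inherited strategies by cutting it at the successive uses of $e$: the segments that avoid $e$ are realizable inside $\A_{\bar e}$, so the cycles they contribute to $cycles(\pi)$ lie on the correct side of $Y$ (here I would use the sub-fact, implicit in the Transfer Theorem, that a memoryless strategy winning the first cycle game forces \emph{every} cycle of the decomposition, not just the first, to the correct side). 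The genuinely delicate case is when $e$ is used infinitely often; to rule out that Player $0$ thereby wins I would invoke $Y$-\separability\ directly, in the form $O_{\EAC{Y}}(\A)\cap O_{\EAC{\neg Y}}(\A)=\emptyset$, to contradict the existence of a play that is simultaneously consistent with Player $1$'s strategy and won by Player $0$. This is exactly the point where the hypothesis is indispensable: it is what fails for $Y=\Xlength$ (the source of the counterexample to \cite{BSV}), and it is why the argument cannot lean on \emph{uniform} inherited strategies, which need not exist without cyclic closure (Theorem~\ref{uniform}(2)) --- the gluing must therefore be carried out with the per-vertex pointwise strategies supplied by the induction hypothesis. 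Combining the two inclusions with determinacy of the finite-duration first cycle game yields $W_0(\A)=W_0(\A_e)\cup W_0(\A_{\bar e})$ and a dual identity for Player $1$, each equipped with a memoryless strategy, completing the induction and hence the theorem.
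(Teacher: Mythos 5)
Your opening moves --- passing to $(\A,O_{\EAC{Y}}(\A))$ via Lemma~\ref{lem:EAgreedy} and Theorem~\ref{greedy}, and deriving the uniform statement from the pointwise one by Theorem~\ref{uniform}(\ref{unif}) --- coincide with the paper's. But from there the paper does \emph{not} run an induction on edges: it explicitly generalises the Ehrenfeucht--Mycielski argument, reasoning about infinite plays and intertwining $\FC{Y}$ with $\EAC{Y}$. That choice is deliberate, because the edge-deletion induction is exactly the scheme of \cite{BSV}, whose strategy-combination step is the very thing this paper exhibits a counterexample to. You are therefore proposing to rehabilitate the proof skeleton whose failure motivates the theorem, and the entire burden of your proposal falls on the one step you do not actually carry out.

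That step --- gluing Player $1$'s two inherited memoryless strategies (call them $\tau_e$ and $\tau_{\bar e}$) into a single memoryless strategy on $\A$ winning from every vertex where she wins in both $\A_e$ and $\A_{\bar e}$ --- is not a technicality; it is the whole theorem, and your sketch of it does not go through as written. First, you never specify which inherited strategy the combined strategy follows at Player $1$'s own vertices, where $\tau_e$ and $\tau_{\bar e}$ may disagree; and since the induction hypothesis is only \emph{pointwise}, whichever one is followed is only known to win from particular starting vertices, so after Player $0$ crosses $e$ the play may sit at a vertex from which that strategy is not winning. Second, cutting the play at uses of $e$ does not interact cleanly with $cycles(\pi)$: every cycle of the decomposition containing $e$ straddles a cut, and the cycles ``contributed'' by an $e$-free segment depend on the stack inherited from earlier segments; the claim that these cycles lie on the correct side of $Y$ presupposes exactly the from-every-reached-vertex winning property that the pointwise hypothesis does not supply. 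Third, the closing appeal to $Y$-\separability\ is circular: to use $O_{\EAC{Y}}(\A)\cap O_{\EAC{\neg Y}}(\A)=\emptyset$ you must first show that the play produced by your (unspecified) combined strategy lies in $O_{\EAC{\neg Y}}(\A)$, which is precisely what the gluing was supposed to establish. An edge-induction of this kind may well be reparable --- for instance by factoring plays at the pivot vertex $v$ rather than at uses of $e$ and proving a selectivity-type property of $O_{\EAC{Y}}$ from \separability, in the spirit of Gimbert--Zielonka --- but as it stands your proposal identifies the central obstacle without overcoming it, so it does not constitute a proof.
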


It is of interest to note that the proof of this theorem is a generalisation of the
 proof used in~\cite{EM79} for showing memoryless determinacy of mean-payoff games. As
in~\cite{EM79}, our proof reasons about infinite plays. More specifically, we obtain
from Theorem~\ref{greedy} and Lemma~\ref{lem:EAgreedy} that the winning regions of each
player in the games $(\A,O_{\EAC{Y}}(\A))$ and $(\A,O_{\FC{Y}}(\A))$ coincide, and then
go on and use this fact to derive memoryless strategies for the game
$(\A,O_{\FC{Y}}(\A))$.

\begin{cor}\label{cor:greedy-separable}
Suppose arena $\A$ is $Y$-\separable.
\begin{enumerate}
\item If  $(\A,O)$ is $Y$-greedy, then the game $(\A,O)$ is pointwise memoryless
    determined.
\item The games $(\A,O_{\EAC{Y}}(\A))$ and $(\A,O_{\AC{Y}}(\A))$ are pointwise
    memoryless determined.
\end{enumerate}
If in addition $Y$ is closed under cyclic permutations, then these game are uniform
memoryless determined.
\end{cor}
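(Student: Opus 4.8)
The plan is to reduce all the games in the statement to the first cycle game $(\A,O_{\FC{Y}}(\A))$ via the Transfer Theorem (Theorem~\ref{greedy}), and then to quote the Memoryless Determinacy Theorem (Theorem~\ref{EM}), which already settles the FCG under the standing hypothesis that $\A$ is $Y$-\separable. The only thing one has to supply beyond these two results is the observation that each of the three games named in the corollary is $Y$-greedy, so that Theorem~\ref{greedy} applies verbatim. I would isolate the generic transfer argument once and then instantiate it three times.

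For item~1, assume $(\A,O)$ is $Y$-greedy. Theorem~\ref{greedy} gives two things: the winning regions of $(\A,O)$ and of $(\A,O_{\FC{Y}}(\A))$ coincide (so in particular $(\A,O)$ is determined), and for every memoryless strategy $S$ and vertex $v$, $S$ wins from $v$ in $(\A,O)$ if and only if $S$ wins from $v$ in $(\A,O_{\FC{Y}}(\A))$. Since $\A$ is $Y$-\separable, Theorem~\ref{EM} says $(\A,O_{\FC{Y}}(\A))$ is pointwise memoryless determined, i.e.\ from each vertex $v$ in a player's winning region there is a memoryless strategy winning from $v$ in the FCG. By the stated equivalence that same memoryless strategy wins from $v$ in $(\A,O)$, and because the winning regions coincide this produces, for each player, a memoryless winning strategy from every vertex of its $(\A,O)$-winning region. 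Hence $(\A,O)$ is pointwise memoryless determined.

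For item~2 it suffices to verify that both $(\A,O_{\EAC{Y}}(\A))$ and $(\A,O_{\AC{Y}}(\A))$ are $Y$-greedy and then to appeal to item~1. For the suffix-all-cycles game this is precisely Lemma~\ref{lem:EAgreedy}. For the all-cycles game, the inclusion $O_{\AC{Y}}(\A)\subseteq O_{\AC{Y}}(\A)$ is trivial, and the remaining half of $Y$-greediness, namely $O_{\AC{\neg Y}}(\A)\subseteq V^\omega\setminus O_{\AC{Y}}(\A)$, is just the disjointness of $O_{\AC{Y}}(\A)$ and $O_{\AC{\neg Y}}(\A)$. This is immediate from the remark that the cycles-decomposition of any infinite play over a finite vertex set leaves at most $|V|-1$ edges unpopped, and therefore outputs at least one (indeed infinitely many) cycles; a single such cycle cannot have its label lie in both $Y$ and $\neg Y=\universe^*\setminus Y$, so no play can belong to both objectives. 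Note this particular step needs neither $Y$-\separability\ nor closure under cyclic permutations.

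For the closing sentence, assume in addition that $Y$ is closed under cyclic permutations. Then Theorem~\ref{EM} upgrades to uniform memoryless determinacy of $(\A,O_{\FC{Y}}(\A))$: for each player $i$ there is a \emph{single} memoryless strategy $S$ winning from every vertex of $i$'s FCG-winning-region. Applying the pointwise equivalence of Theorem~\ref{greedy} to this one $S$ at each such vertex, and using that the winning regions coincide, shows $S$ is uniform winning for $i$ in any $Y$-greedy game $(\A,O)$; instantiating at the games of items~1 and~2 gives uniform memoryless determinacy in all cases. I do not expect a serious obstacle here, since the corollary is essentially a bookkeeping combination of the Transfer Theorem, Theorem~\ref{EM}, and Lemma~\ref{lem:EAgreedy}; the only genuinely new (though easy) point is the $\AC$ greediness argument above, and the only thing requiring care is checking that a single strategy, rather than a vertex-dependent family, transfers in the uniform case.
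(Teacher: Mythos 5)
Your proposal is correct and follows essentially the same route as the paper: combine the Transfer Theorem with the Memoryless Determinacy Theorem for item~1, use Lemma~\ref{lem:EAgreedy} together with the fact that $(\A,O_{\AC{Y}}(\A))$ is always $Y$-greedy for item~2, and invoke the closure-under-cyclic-permutations upgrade to uniformity (which the paper routes through Theorem~\ref{uniform}, item~\ref{unif}) before transferring the single strategy back. The only difference is that you spell out details the paper leaves implicit, namely the disjointness argument showing the all-cycles game is $Y$-greedy and the per-vertex transfer of the one uniform strategy, both of which are sound.
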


\begin{proof}
For the first item combine Theorems~\ref{greedy} and \ref{EM}. For the second, use
Lemma~\ref{lem:EAgreedy} and the fact that $(\A,O_{\AC{Y}}(\A))$ is always $Y$-greedy.
For the final statement apply Theorem~\ref{uniform} item \ref{unif}.
\end{proof}

We now provide a simple sufficient condition on $Y$ --- that does not involve reasoning
about cycles-decompositions of infinite paths ---  that ensures that every arena $\A$ is
$Y$-\separable:

\begin{thm} \label{char}
Let $Y \subseteq \universe^*$ be a cycle property. If $Y$ is closed under cyclic
permutations\footnote{It may be worth noting that $Y$ is closed under cyclic
permutations iff so is $\neg Y$.}, and both $Y$ and $\neg Y$ are closed under
concatenation, then every arena $\A$ is $Y$-\separable.
\end{thm}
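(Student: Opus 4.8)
The plan is to prove the statement directly by contradiction: assuming the three closure conditions, I will show that no single play $\pi$ can simultaneously witness membership in $O_{\EAC{Y}}(\A)$ and in $O_{\EAC{\neg Y}}(\A)$, which is exactly $Y$-\separability. The engine of the whole argument is a single lemma about closed walks, applied to both $Y$ and $\neg Y$ (note $\neg Y$ inherits all three closure properties). Call a property $Z$ \emph{nice} if it is closed under cyclic permutations and under concatenation. Lemma: if $Z$ is nice and $w$ is a closed walk whose cycles-decomposition is \emph{complete} (empty stack at both ends) and consists only of cycles $u$ with $\lambda(u)\in Z$, then $\lambda(w)\in Z$. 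I would prove this by induction on the length of $w$, using the elementary ``insertion'' fact that for nice $Z$, if $\alpha\beta\in Z$ and $\gamma\in Z$ then $\alpha\gamma\beta\in Z$ (rotate $\alpha\beta$ to $\beta\alpha$, concatenate $\gamma$, rotate back). Splitting $w$ at the outer simple cycle it finally closes and treating each detour hanging off that cycle as a shorter complete closed walk, the insertion fact reassembles the pieces into $\lambda(w)\in Z$.

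Next I would record the structural fact that drives everything: in the cycles-decomposition of a play started at a vertex $s$, the stack is empty precisely at the occurrences of $s$ (the base vertex never changes, and a simple path from $s$ revisits $s$ only with length zero). Hence, if the decomposition from an occurrence of $s$ has \emph{all} its cycles in $Z$, then for any two occurrences $a<a'$ of $s$ the infix $\pi_a\cdots\pi_{a'}$ is a complete closed walk all of whose cycles lie in $Z$, so the Lemma gives $\lambda(\pi_a\cdots\pi_{a'})\in Z$. A short stabilisation argument (the bottom of the decomposition stack eventually freezes, and the top of the frozen part is a recurrent vertex) lets me assume the two witnessing suffixes are based at vertices $s_1,s_2$ that occur infinitely often.

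Finally I would bridge the two decompositions — which start at different vertices and genuinely produce different cycles — using eventual periodicity. The joint configuration (current vertex together with the two decomposition stacks) ranges over a finite set, so some configuration recurs; splicing $\pi$ into an eventually periodic play that loops a block $P$ between two such recurrences, chosen long enough to contain both $s_1$ and $s_2$, yields a play whose $s_1$-decomposition is still all-$Y$ and whose $s_2$-decomposition is still all-$\neg Y$ (each period returns both stacks to the same state and replays the same cycles). In the periodic tail I take a length-$|P|$ infix starting at an occurrence of $s_1$ and another starting at an occurrence of $s_2$: the first is an $s_1$-excursion, so its label is in $Y$; the second is an $s_2$-excursion, so its label is in $\neg Y$; and both labels are cyclic rotations of $\lambda(P)$, hence cyclic permutations of each other. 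Closure of $Y$ under cyclic permutations then forces this common label into $Y\cap\neg Y=\emptyset$, a contradiction.

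The main obstacle is exactly this last bridging step. The length/parity example in the paper's own footnote shows that the decomposition really does depend on the starting vertex, so there is no hope of exhibiting a \emph{single} infix that is a complete excursion for both base points at once; the two pieces of information live at $s_1$ and at $s_2$ and must be transported to a common object. Passing to the eventually periodic play and comparing two full-period windows is what makes the two base points interchangeable up to rotation, and it is precisely here that both halves of the hypothesis are used together — concatenation closure of each of $Y$ and $\neg Y$ inside the Lemma, and cyclic closure to identify the two rotated window labels. The subsidiary reduction to recurrent base vertices, via stack stabilisation, is the other place that needs care.
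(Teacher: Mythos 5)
Your argument is correct. The paper defers the proof of this theorem to the full version, so there is no in-paper proof to compare against; judged on its own merits, every step of your sketch goes through. The insertion fact ($\alpha\beta\in Z$, $\gamma\in Z$ imply $\alpha\gamma\beta\in Z$ for $Z$ closed under rotation and concatenation) is exactly what is needed to reassemble the outer cycle with its excursions, and your closed-walk lemma is sound --- just note that the induction also needs the case where the walk revisits the base vertex in the middle (the stack empties early), which is handled by concatenation closure alone before the outer-cycle/excursion analysis applies. The stack-stabilisation step is the right way to reduce to recurrent base vertices, and the one thing to spell out there is that the standalone decomposition of the suffix based at the stabilised vertex $t$ outputs the same cycles as the tail of the original decomposition (because the frozen bottom of the stack is never popped again, the popping decisions above it coincide); the same ``frozen bottom'' argument justifies that each excursion in your lemma is decomposed as if standalone. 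The periodic-splice bridge is also valid: two times with identical joint configuration give a legal ultimately periodic play whose two decompositions replay only cycles already certified in $Y$ and $\neg Y$ respectively, and the two full-period windows anchored at $s_1$ and $s_2$ are rotations of one another, so cyclic closure of $Y$ yields the contradiction. You have correctly identified the two genuinely delicate points (that the decomposition depends on the base vertex, and the reduction to recurrent bases), and your use of the hypotheses is exactly where it must be: concatenation closure of both $Y$ and $\neg Y$ inside the lemma, cyclic closure at the final comparison.
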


It is easy to check that the following cycle properties satisfy the hypothesis of
Theorem~\ref{char}: $\Xparity$, $\Xenergy$, $\Xmean_\nu$, and $\Xparityenergy$. On the
other hand, $\neg \Xlength$ is not closed under concatenation, whereas $\Xlarge$ is not
closed under cyclic permutations.

We conclude with the main result of this section:

\begin{cor} \label{cor:closed}
Suppose $Y$ is closed under cyclic permutations, and both $Y$ and its complement are
closed under concatenation. Then the following games are uniform memoryless determined
for every arena $\A$: $(\A,O_W)$ if $W$ is $Y$-greedy on $\A$,  $(\A,O_{\EAC{Y}}(\A))$,
and $(\A,O_{\AC{Y}}(\A))$.
\end{cor}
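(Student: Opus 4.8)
The plan is to derive this corollary directly by chaining the two main results of the section, so that no new argument is needed. The hypotheses assumed here --- that $Y$ is closed under cyclic permutations, and that both $Y$ and $\neg Y$ are closed under concatenation --- are precisely the hypotheses of Theorem~\ref{char}. My first step is therefore to invoke Theorem~\ref{char} to conclude that every arena $\A$ is $Y$-\separable.

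Once $Y$-\separability\ of an arbitrary $\A$ is in hand, I would appeal to Corollary~\ref{cor:greedy-separable}, whose statement lists exactly the three families of games at issue. Its first item gives pointwise memoryless determinacy of $(\A,O_W)$ whenever $W$ is $Y$-greedy on $\A$, and its second item gives pointwise memoryless determinacy of $(\A,O_{\EAC{Y}}(\A))$ and $(\A,O_{\AC{Y}}(\A))$. To upgrade each of these from pointwise to uniform, I would use the final clause of Corollary~\ref{cor:greedy-separable}, which requires the extra hypothesis that $Y$ be closed under cyclic permutations --- a hypothesis we are given in the present statement. This dispatches all three cases at once.

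I do not anticipate a genuine obstacle here: the substantive work has already been isolated in Theorem~\ref{char} (the combinatorial core that converts the closure conditions into $Y$-\separability) and in the machinery behind Corollary~\ref{cor:greedy-separable} (namely the Transfer Theorem~\ref{greedy}, the Memoryless Determinacy Theorem~\ref{EM}, and the pointwise-to-uniform upgrade of Theorem~\ref{uniform}). The only points to verify with care are bookkeeping rather than mathematical: that the closure hypotheses stated in the corollary match verbatim those demanded by Theorem~\ref{char}, and that the cyclic-permutation hypothesis required for the uniform (as opposed to merely pointwise) conclusion of Corollary~\ref{cor:greedy-separable} is indeed among the assumptions. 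Both checks are immediate, so the proof reduces to one sentence invoking Theorem~\ref{char} followed by one sentence invoking Corollary~\ref{cor:greedy-separable}.
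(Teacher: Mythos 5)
Your proposal is correct and follows exactly the intended derivation: apply Theorem~\ref{char} to obtain $Y$-\separability\ of every arena, then invoke Corollary~\ref{cor:greedy-separable} (including its final clause, whose cyclic-permutation hypothesis is among the assumptions) to get uniform memoryless determinacy of all three games. This matches the paper's route, which places the corollary immediately after Theorem~\ref{char} precisely as the composition of that theorem with Corollary~\ref{cor:greedy-separable}.
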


We believe that Corollary~\ref{cor:closed} provides a practical and easy way of deducing
that many infinite duration games are uniform memoryless determined, as follows: exhibit
a cycle property $Y$ that is closed under cyclic permutations and both $Y$ and $\neg Y$
are closed under concatenation, such that the winning condition $W$ is $Y$-greedy on the
arena $A$ of interest. Finding such a $Y$ is usually easy since it is simply a
`finitary' version of the winning condition $W$. For example, uniform memoryless
determinacy of parity games, mean-payoff games, and energy-games, can easily be deduced
by considering the cycle properties $\Xparity$, $\Xmean_\nu$, and $\Xenergy$.

%\nocite{*}
\bibliographystyle{eptcs}
\bibliography{cyclegames}

\end{document}